




\documentclass{ecai} 



\usepackage{latexsym}
\usepackage{amssymb}
\usepackage{amsmath}
\usepackage{amsthm}
\usepackage{booktabs}
\usepackage{enumitem}
\usepackage{graphicx}
\usepackage{color}



\newtheorem{definition}{Definition}
\newtheorem{example}{Example}

\newtheorem{observation}{Observation}

\newtheorem{theorem}{Theorem}
\newtheorem{property}{Property}
\newtheorem{corollary}{Corollary}



\newcommand{\BibTeX}{B\kern-.05em{\sc i\kern-.025em b}\kern-.08em\TeX}


\begin{document}


\begin{frontmatter}


\paperid{0107} 



\title{Weighted Envy-Freeness Revisited: Indivisible Resource and House Allocations}


\author[A]{\fnms{Yuxi}~\snm{Liu}}
\author[A]{\fnms{Mingyu}~\snm{Xiao}\orcid{0000-0002-1012-2373}\thanks{Corresponding Author. Email: myxiao@uestc.edu.cn}}


\address[A]{University of Electronic Science and Technology of China, Chengdu, China}


\begin{abstract}
Envy-Freeness is one of the most fundamental and important concepts in fair allocation. Some recent studies have focused on the concept of weighted envy-freeness.  Under this concept, each agent is assigned a weight, and their valuations are divided by their weights when assessing fairness. This concept can promote more fairness in some scenarios. But on the other hand, experimental research has shown that this weighted envy-freeness significantly reduces the likelihood of fair allocations. When we must allocate the resources, we may propose fairness concepts with lower requirements that are potentially more feasible to implement. In this paper, we revisit weighted envy-freeness and propose a new concept called SumAvg-envy-freeness, which substantially increases the existence of fair allocations. This new concept can be seen as a complement of the normal weighted envy-fairness. Furthermore, we systematically study the computational complexity of finding fair allocations under the old and new weighted fairness concepts in two types of classic problems: Indivisible Resource Allocation and House Allocation. Our study provides a comprehensive characterization of various properties of weighted envy-freeness.
\end{abstract}

\end{frontmatter}


\section{Introduction}

Given a set of valuable resources, the fair division problem asks whether these resources can be allocated among agents with potentially differing preferences in a fair manner.
This problem is important in economics and has garnered increasing attention in artificial intelligence and computer science over the past few decades~\citep{brams1996fair,markakis2017approximation,moulin2004fair,moulin2019fair,thomson2016fair}.
The problem has many applications, including land division~\citep{segal2017fair}, apartment rent sharing~\citep{gal2016fairest}, and divorce settlements~\citep{brams1996fair}.
Envy-freeness is one of the most widely studied fairness criteria in the literature.
It requires that each agent considers their assigned bundle to be at least as desirable as any other bundle in the allocation~\citep{foley1966resource,varian1973equity}.
For more information about fair allocations, we refer to two surveys by Amanatidis et al.~\citep{amanatidis2023fair} and Aziz et al.~\citep{aziz2022algorithmic}.

Recently, motivated by real-world applications where agents are often not equally obliged, Chakraborty et al.~\citep{chakraborty2021weighted} introduced the weighted setting.
In this framework, the weights assigned to agents can reflect widely recognized and accepted indicators of entitlement, such as eligibility or merit. A classic illustration of this is inheritance distribution, where individuals who are closer relatives typically have a greater claim to the inheritance than more distant relatives. Likewise, larger organizations with more individuals may be entitled to a larger share of resources. By incorporating weights, this model is able to account for a wider variety of real-world situations.
Weighted models have been extremely studied in fair allocations, and we will provide more information on weighted models later.

Under the most widely studied weighted envy-freeness, the utility of each agent is divided by his weight in the fairness. 
This setup is simple and can promote more fairness in some scenarios.
However, experimental tests show that this weighted envy-freeness significantly reduces the likelihood of fair allocations~\citep{chakraborty2021weighted}.
Our subsequent experiments will also show that the probability of weighted fair allocations is substantially less than the probability of fair allocations without weights.

When discussing fair allocations, we often introduce new fairness concepts because the fair allocations under the current concept may not exist.
In such cases, we must settle for a relaxed fairness concept.
For instance, envy-freeness allocation may not always exist, and we introduced new fairness concepts like MMS and EF1.
The weighted envy-freeness appears to take a different approach: it enhances fairness theoretically, yet diminishes the feasibility of equitable distribution in practice.
To understand this better, we reexamine weighted envy-freeness and propose a new envy-freeness concept under the weight setting, which is called \emph{SumAvg-envy-freeness}.
In fact, our envy-freeness integrates the classic envy-freeness and weighted envy-freeness concepts.
We will show allocations under the new envy-freeness concept is more practical and more ``likely'' to exist than that under the old weighted envy-freeness concept.

The allocation problems we consider mainly involve two scenarios. One is the allocation of indivisible resources, where each resource can only be assigned to one agent as a whole and we also require that all resources be allocated. Otherwise, not allocating any resources at all could be a trivial solution. The other one is the house allocation problem, which requires that each agent is assigned exactly one resource.
The house allocation problem is also an important problem in fairness allocations.
When considering the weighted setting in house allocation, the weights for a family can represent the number of members in that family.
Naturally, larger families require larger living spaces compared to smaller families.
Therefore, even if a smaller family receives a less valuable house than a larger family, there may still be no envy.

\subsection{More Related Work}
\paragraph{Weighted Models.}
Weighted models have been studied in a wide range of fair allocations under different concepts of fairness. In addition to weighted envy-freeness, there are other concepts of weighted fairness in the literature.
Chakraborty et al.~\citep{chakraborty2021weighted} introduced the concept of weighted envy-freeness up to one item (WEF1) for the allocation of goods and demonstrated that WEF1 allocations always exist and can be found in polynomial time.
Both Wu et al.~\citep{wu2023weighted} and Springer et al.~\citep{springer2024almost} established the existence and presented algorithms for the computation of WEF1 allocations for chores.
WPROP1 allocations have been shown to always exist for chores~\citep{branzei2024algorithms} and for a mixture of goods and chores~\citep{aziz2020polynomial}.
Li et al.~\citep{li2023truthful} established the existence and computation of WPROPX allocations for chores. 
The weighted version of MMS has also been studied for both goods~\citep{farhadi2019fair} and chores~\citep{aziz2019weighted}.
When we consider more general utility functions, Chakraborty et al.~\citep{chakraborty2021weighted} showed that WEF1 
allocations do not always exist for arbitrary monotonic utilities. There are also some researches combining weighted setting and non-additive utility functions~\citep{DBLP:conf/aaai/MontanariSST24,DBLP:conf/sigecom/ViswanathanZ23}.
For a comprehensive review of existing research on weighted indivisible fair allocation, please refer to the recent surveys~\citep{suksompong2024weighted}.

\paragraph{House Allocation.}
House allocation is a special case of indivisible resource allocation, where each agent gets exactly one resource.
This problem has been extensively studied in the context of designing  incentive-compatible mechanisms and ensuring economic efficiency~\citep{abdulkadirouglu2003school,svensson1999strategy}.
Recent research has increasingly focused on fairness, often defined through the notion of envy-freeness~\citep{varian1973equity,foley1966resource,aziz2024envy,beynier2019local}.
Gan et al.~\citep{gan2019envy} developed a polynomial-time algorithm to check and find an envy-free house allocation.
Recently,  Dai et al.~\citep{dai2024weighted} extended the weight setting to house allocation and also got a polynomial-time algorithm to check and find an weighted envy-free house allocation.
When the agents are previously partitioned into several groups, group-fairness was also extended to house allocation~\citep{DBLP:conf/ecai/Gross-HumbertBB23}.
In cases where fair allocations under existing fairness concepts may not be achievable, researchers have also explored compromises, aiming to balance fairness objectives. The problem of finding allocations that maximize multiple fairness criteria has been widely studied~\citep{kamiyama2021complexity,madathil2023complexity,hosseini2023graphical,DBLP:conf/atal/Hosseini0SVV24}.

\section{Our Model and Contributions}

For the sake of presentation, we call the envy-freeness without weight \emph{Sum-envy-freeness} and the previous weighted envy-freeness \emph{Avg-envy-freeness}.

\subsection{New Concept of Envy-Freeness}

As mentioned earlier, one motivation for proposing a new concept of fairness is the hope that fair allocations under this new concept will exist. In fact, most new fairness concepts typically exhibit the \textbf{\emph{inheritability}} property: if an allocation satisfies the original fairness concept, it will also satisfy the new one.

We also observe another property commonly held by previous fairness concepts: when an agent envies another, swapping the resources assigned to them can eliminate the envy. We refer to this property as \textbf{\emph{exchange elimination}}. This property is quite reasonable, and almost all existing fairness concepts satisfy it.

The concept of weighted envy-freeness (Avg-envy-freeness) does not satisfy either of the two properties mentioned above. In Example 1, when there are no weights, an envy-free allocation exists where each agent receives one resource. But after introducing weights, no envy-free allocation can be found. Thus, weighted envy-freeness does not satisfy the inheritability property.
Moreover, when each agent receives one resource, agent 2 envies agent 1 because agent 1's weight is larger. Even if we exchange the resources between them, agent 2 continues to envy agent 1. Therefore, the exchange elimination property does not hold.

\begin{example}
    There are two indivisible resources $r_1$ and $r_2$, and two agents $a_1$ and $a_2$ with weights $w_1 = 1$ and $w_2 = 2$.
    The utility of each agent on each resource is the same.
\end{example}

We hope that the two properties discussed above can be satisfied under the concept of weighted fairness. To achieve this, we hope that a fair allocation in the unweighted case should also be fair in the weighted case.

Based on this principle, we modify the concept of fairness in the weighted model as follows: agent A will not envy agent B if at least one of the following two conditions is met: (1) in agent A's view, the utility of the resources assigned to agent A divided by agent A's weight is no less than the utility of the resources assigned to agent B divided by agent B's weight (this corresponds to the condition for Avg-envy-freeness); (2) in agent A's view, the utility of the resources assigned to agent A is no less than the utility of the resources assigned to agent B (this corresponds to the condition for Sum-envy-freeness).
We refer to this new concept as \emph{SumAvg-envy-freeness}.

SumAvg-envy-freeness should not be viewed as a simple union of the previous concepts of Sum-envy-freeness and Avg-envy-freeness. For example, in Example 2, there exists a SumAvg-envy-free allocation where $r_1$ is assigned to $a_1$ and $r_2$ is assigned to $a_2$. However, there is no allocation that is Sum-envy-free or Avg-envy-free.

\begin{example}
    There are two indivisible resources $r_1$ and $r_2$, and two agents $a_1$ and $a_2$ with weights $w_1=1$ and $w_2= 10$.
    The utility of each agent on resource $r_1$ is the same 5 and the utility of each agent on resource $r_2$ is the same 10.
\end{example}

From the above examples, we can see that SumAvg-envy-free allocation is more likely to exist.
Our experiments further support this observation.
We test on 10000 weighted instances with 8 resources and 5 to 8 agents. The ratio of instances having fair allocations under different envy-free concepts is shown in Table~\ref{tab:preresult1}.
More discussion and experimental results in different settings are shown in Section 6.

It should be noted that we are not denying Avg-envy-freeness. 
In fact, our proposed concept of SumAvg-envy-freeness complements, rather than contradicts, the idea of Avg-envy-fairness.
SumAvg-envy-freeness exhibits the inheritability property not only for Sum-envy-freeness but also Avg-envy-freeness.
When Avg-envy-freeness does not exist, we still need to allocate the resources—how should we proceed? 
Perhaps SumAvg-envy-freeness can be used as a fallback concept to perform the allocation. 
Our proposal of SumAvg-envy-freeness is not a denial of Avg-envy-freeness, but rather an extension of the notion of fairness under weights. 
It's analogous to how EF1 was proposed when EF allocations were not guaranteed to exist. 

\begin{table}[!t]
    \centering
    \small
    \begin{tabular}{cccc}
        \hline
        $\#$ agents & Sum & Avg & SumAvg\\

        \hline
        5     & $19.63\%$ & $10.12\%$ & $98.02\%$\\
        \hline
        6     & $2.12\%$  & $0.52\%$  & $90.59\%$\\
        \hline
        7     & $0.45\%$  & $0.01\%$ & $69.81\%$\\
        \hline
        8     & $0.32\%$  & $<0.01\%$ & $27.90\%$\\
        \hline
    \end{tabular}
    \caption{The ratio of instances having fair allocations under different envy-free concepts among 10000 tested instances}
    \label{tab:preresult1}
\end{table}

\subsection{Computational Complexity}

After introducing a new concept of fairness, we need to study the existence of the fair allocation under this concept and the computational complexity of computing the fair allocation.
Although SumAvg-envy-freeness can be applied for a wider range, SumAvg-envy-free allocations may still not exist even in very simple scenarios. For example, in the case where it is to assign one resource to two agents, fair allocation will not exist. 
Next, we systematically investigate the computational complexity of finding a fair allocation under different fairness concepts.

We will first consider the NP-hardness of our problems. In fact, they will be computationally hard in general. Next, our research approach contains two ways.
One is to investigate whether a polynomial-time algorithm exists for the restricted version of the problem under different constraints.
The other is to study the parameterized complexity of the problem with different parameters.
We mainly consider two restricted versions on the utility functions which are \emph{identical} and \emph{0/1}.
We call an utility function \emph{identical} if the utility of each agent on any resource $r$ is the same.
We call an utility function 0/1 if the utility of each agent on any resource $r$ is either 0 or 1.
For parameterized complexity, we  will mainly consider two parameters: ``number of agents'' and ``number of resources''.
Previous results and our results under the three fair concepts Sum-envy-freeness, Avg-envy-freeness, and SumAvg-envy-freeness 
are presented in Table~\ref{tab:results}.

\begin{table*}[!t]
    \centering
    \small
    \begin{tabular}{cccc}
        \hline
        Preference Type & Sum & Avg & SumAvg\\
        \hline
        for $\#$agents & ~ & ~ & ~\\
        id. 0/1     & P~\citep{bredereck2022envy} & \textbf{P(Obs.~\ref{lemma:avg-poly})} & \textbf{P(Thm.~\ref{thm:P})}\\
        0/1         & FPT~\citep{bliem2016complexity}  & \textbf{FPT(Coro.~\ref{coro:avgfptn})} & \textbf{FPT(Coro.~\ref{coro:SumAvgfpt})} \\
        id. (unary) & W[1]-h~\citep{bliem2016complexity}  & W[1]-h~\citep{bliem2016complexity}  & W[1]-h~\citep{bliem2016complexity}  \\ 
        id. (binary)& para-NP-h~\citep{bouveret2008efficiency}  & para-NP-h~\citep{bouveret2008efficiency}  & para-NP-h~\citep{bouveret2008efficiency}  \\ 
        \hline
        for $\#$resources & ~ & ~ & ~\\
        add. mon. & FPT~\citep{bliem2016complexity} & \textbf{FPT(Coro.~\ref{coro:avgfptm})} & \textbf{FPT(Thm.~\ref{thm:fpt})} \\
        \hline
    \end{tabular}
    \caption{Parameterized complexity of EF-Allocation. The term ``add.mon.'' stands for ``additive monotonic.''. The term ``id.'' stands for ``identical''. The term ``unary'' means that the utility values are unary encodings while the term ``binary'' means that the utility values are binary encodings. Our results are in boldface.} 
    \label{tab:results}
\end{table*}

In Section 4, we consider the allocation of indivisible resources. We show that under identical and 0/1 preferences, checking the existence of Avg-envy-free and SumAvg-envy-free allocations can be solved in polynomial time. However, under general 0/1 preferences, both of these problems become NP-hard.

We then further demonstrate that under 0/1 preferences, checking the existence of Avg-envy-free and SumAvg-envy-free allocations is fixed-parameter tractable (FPT) with respect to the parameter ``number of agents." 
Under general preferences, we show that checking the existence of Avg-envy-free and SumAvg-envy-free allocations is FPT with respect to the parameter ``number of resources."

In Section 5, we shift focus to house allocation. 
Under general preferences, while checking the existence of Sum-envy-free and Avg-envy-free allocations can be solved in polynomial time, the problem of checking the existence of Sum-envy-free allocations becomes NP-hard, even when there are only two types of weights and three types of numbers in the utility functions. However, under 0/1 preferences or identical preferences, finding SumAvg-envy-free house allocations can be done in polynomial time.

\begin{table}[!t]
    \centering
    \begin{tabular}{cccc}
        \hline
        Preference Type & Sum & Avg & SumAvg\\
        \hline
        0/1         & P~\citep{gan2019envy}  & P~\citep{dai2024weighted}   & \textbf{P(Obs.~\ref{obs:house01P})} \\
        id.         & P~\citep{gan2019envy}  & P~\citep{dai2024weighted}   & \textbf{P(Thm.~\ref{thm:houseidP})} \\
        add. mon.   & P~\citep{gan2019envy}  & P~\citep{dai2024weighted}   & \textbf{NP-h(Thm.~\ref{thm:NP-h})}\\
        \hline
    \end{tabular}
    \caption{Classic complexity of EF-house-Allocation. The term ``add.mon.'' stands for ``additive monotonic.''. The term ``id.'' stands for ``identical''. Our results are in boldface.} 
    \label{tab:results2}
\end{table}

\section{Preliminaries}


An instance of the weighted fair allocation problem consists of a set $A = \{a_1, a_2, \dots, a_n\}$ of $n$ agents where each $a_i\in A$ has weight $w_i > 0$.
Let $R=\{r_1, r_2,\dots, r_m\}$ be a set of $m$ resources with utility functions $u_i : 2^R \rightarrow \mathbb{Z}$.
An \emph{allocation} of a set $R$ of indivisible resources to a set $A$ of agents is a mapping $\pi : A \rightarrow 2^R$ such that $\pi(a)$ and $\pi(a')$ are disjoint whenever $a \neq a'$. For any agent $a \in A$, we call $\pi(a)$ the \emph{bundle} of $a$ under $\pi$.
Furthermore, if for each $a_i\in A$, the size of $\pi(a_i)$ is exactly one, the allocation $\pi$ is called a \emph{house allocation}.

An utility function $u : 2^R \rightarrow \mathbb{Z}$ is \emph{additive} if for each bundle $X \subseteq R$, $u(X) = \sum_{r\in X} u(\{r\})$.
An additive utility function is \emph{monotonic} if it only outputs non-negative utilities.
In this paper, we assume that utility functions are additive and monotonic.

Next, we give the formal definitions of the three fairness concepts. 

\begin{definition}
For any pair $a_i$ and $a_j$ of agents in $A$,
    an allocation is Sum-envy-free (SEF) if it holds that
    \[
        u_i(\pi(a_i))\geq u_i(\pi(a_j));
    \]
An allocation is Avg-envy-free (AEF) if it holds that 
    \[
        \frac{u_i(\pi(a_i))}{w_i}\geq \frac{u_i(\pi(a_j))}{w_j};
    \]
An allocation is SumAvg-envy-free (SAEF) if it holds that
    \[
        u_i(\pi(a_i))\geq u_i(\pi(a_j)) \text{ or } \frac{u_i(\pi(a_i))}{w_i}\geq \frac{u_i(\pi(a_j))}{w_j}.
    \]

\end{definition}


\begin{definition}
    An allocation $\pi$ is complete if $\bigcup_{a\in A}\pi(a) = R$.
\end{definition}
When we consider indivisible resource allocations, we may always require the \emph{completeness} to avoid some trivial cases.

We define the following computational problems.

\noindent\rule{\linewidth}{0.2mm}
\textsc{SAEF-Allocation}\\ 
\textbf{Instance:} A set $A$ of $n$ agents where each $a \in A$ has weight $w_a > 0$, a set $R$ of $m$ indivisible resources, a family $U = \{u_1, u_2, \dots, u_n\}$ of non-negative utility functions.\\
\textbf{Task:}  To find a complete and SumAvg-envy-free allocation.\\
\rule{\linewidth}{0.2mm}

Similarly, we can define \textsc{SEF-Allocation} and \textsc{AEF-Allocation} by finding a Sum-envy-free allocation or an Avg-envy-free allocation instead of a SumAvg-envy-free allocation.
For house allocation, we define the following \textsc{SAEF-House-Allocation} problem.

\noindent\rule{\linewidth}{0.2mm}
\textsc{SAEF-House-Allocation}\\ 
\textbf{Instance:} A set $A$ of $n$ agents where each $a \in A$ has weight $w_a > 0$, a set $R$ of $m$ indivisible resources, a family $U = \{u_1, u_2, \dots, u_n\}$ of non-negative utility functions.\\
\textbf{Task:}  To find an SumAvg-envy-free house allocation.\\
\rule{\linewidth}{0.2mm}

Similarly, we can define \textsc{SEF-House-Allocation} and \textsc{AEF-House-Allocation}.




\section{Indivisible Resource Allocations}

In this section, we consider indivisible resource allocations. We will analyze the NP complexity and parameterized complexity for checking the existence of fairness allocations under the three envy-free concepts.
Recall that previous and our results are presented in Table \ref{tab:results}.

Consider an instance $(A, R, U)$ of \textsc{AEF-Allocation} or \textsc{SAEF-Allocation}, if for any agent $a\in A$, we have $w_a = 1$, this instance is equivalent to the same instance of \textsc{SEF-Allocation}.
Thus, the hardness results for \textsc{SEF-Allocation} will imply the same hardness results for \textsc{AEF-Allocation} and \textsc{SAEF-Allocation}.
The NP-hardness results under different restricted preferences of \textsc{SEF-Allocation} are established by Lipton et al.~\citep{lipton2004approximately}, Bouveret and Lang~\citep{bouveret2008efficiency}, and Aziz et al.~\citep{aziz2015fair}.

\subsection{Polynomial solvable cases}
We consider the case where the preferences are 0/1 and identical.
Firstly, we show \textsc{AEF-Allocation} under identical and 0/1 preferences can be solved in polynomial time by a simple observation.
Then we show a main result in this section that \textsc{SAEF-Allocation} under identical and 0/1 preferences can be solved in polynomial time.

\begin{observation}\label{lemma:avg-poly}
    \textsc{AEF-Allocation} under identical and 0/1 preferences can be solved in polynomial time.
\end{observation}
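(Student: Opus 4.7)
The plan is to exploit the strong structure imposed by identical and 0/1 preferences. Because the preferences are identical, we can write $u_i = u$ for all agents; because they are 0/1, each resource $r$ has either $u(r) = 1$ (call these \emph{valuable}) or $u(r) = 0$ (\emph{worthless}). Let $M$ be the number of valuable resources and $W = \sum_{j} w_j$ the total weight. For any complete allocation $\pi$, set $k_i = u(\pi(a_i))$, which equals the number of valuable resources given to $a_i$.

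First I would rewrite the AEF condition in this restricted setting. The inequality $u_i(\pi(a_i))/w_i \geq u_i(\pi(a_j))/w_j$ becomes $k_i / w_i \geq k_j / w_j$. Since this must hold in both directions for every pair $(i,j)$, the quantity $k_i / w_i$ must be a single common value $c$ across all agents, so $k_i = c \cdot w_i$. Summing gives $M = c \cdot W$, hence $c = M/W$ and $k_i = M w_i / W$ is forced.

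From here the algorithm is immediate and polynomial: compute $M$ and $W$, check whether $M w_i / W$ is a non-negative integer for every $i$, and if so let $k_i := M w_i / W$. Then build any complete allocation that assigns exactly $k_i$ valuable resources to agent $a_i$ (possible since $\sum_i k_i = M$) and distributes the worthless resources arbitrarily to satisfy completeness. If the divisibility test fails for some $i$, no AEF allocation exists, since the equality $k_i = M w_i / W$ is necessary.

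I do not expect any real obstacle here; the whole content is the observation that under identical 0/1 utilities, AEF forces all ratios $k_i / w_i$ to be equal, which converts the problem into a one-line divisibility check. The only minor point to state carefully is completeness: worthless resources contribute nothing to any $k_i$ and so can be allocated freely without affecting envy, which is what makes the necessary condition also sufficient.
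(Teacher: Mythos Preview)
Your proposal is correct and follows essentially the same approach as the paper: observe that identical utilities force $u(\pi(a_i))/w_i$ to be constant across agents, hence AEF reduces to a divisibility check, and construct the allocation accordingly. If anything, you are slightly more careful than the paper's own proof, which computes $m/\sum_a w_a$ using the total number $m$ of resources rather than the number $M$ of resources with utility~1; your explicit handling of worthless resources makes both necessity and sufficiency clean.
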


\begin{proof}
    Let $u$ be the identical utility function.
    For any Avg-envy-free allocation and any two agents $a_i, a_j$, we have that $u(\pi(a_i))/w_i \geq u(\pi(a_j))/w_j$ and $u(\pi(a_j))/w_j \geq u(\pi(a_i))/w_i$.
    Thus, we have that for any two agents $a_i, a_j$, $u(\pi(a_i))/w_i = u(\pi(a_j))/w_j$.
    So we can first calculate $m/\sum_{a\in A}w_a$ to represent the number of resources should be allocated per weight.
    Then, for each agent $a\in A$, we check whether $w_a(m/\sum_{a'\in A}w_{a'})$ is an integer to finish our algorithm.
\end{proof}

Next, we show that \textsc{SAEF-Allocation} under identical and 0/1 preferences can be solved in polynomial time.
Firstly, we prove the following two properties of SumAvg-envy-free allocations.

\begin{property}\label{property-1}
    Consider an instance $(A, R, U)$ of \textsc{SAEF-Allocation} where $A = \{a_1, a_2, \dots, a_n\}$ is sorted by weights in ascending order.
    Under identical preferences(let $u$ be the identical utility function), for any SumAvg-envy-free allocation $\pi$, we have that $u(\pi(a_i)) \leq u(\pi(a_{i+1}))$ for any $1 \leq i \leq n - 1$.
\end{property}

\begin{proof}
    By contradiction, we assume $u(\pi(a_i)) > u(\pi(a_{i+1}))$ for some $i$.
    Since $w_i\leq w_{i+1}$, we have that $u(\pi(a_i))/w_i > u(\pi(a_{i+1}))/w_{i + 1}$.
    In this case, $a_{i + 1}$ will envy $a_i$.
\end{proof}

\begin{property}\label{property-2}
    Consider an instance $(A, R, U)$ of \textsc{SAEF-Allocation} where $A = \{a_1, a_2, \dots, a_n\}$ is sorted by weights in ascending order.
    Under identical preferences(let $u$ be the identical utility function), for any SumAvg-envy-free allocation $\pi$, we have that $u(\pi(a_i))/w_i \geq u(\pi(a_{i+1})) / w_{i+1}$ for any $1 \leq i \leq n - 1$.
\end{property}

\begin{proof}
    By contradiction, we assume $u(\pi(a_i))/w_i < u(\pi(a_{i+1}))/w_{i+1}$ for some $i$.
    Since $w_i\leq w_{i+1}$, we have that $u(\pi(a_i)) < u(\pi(a_{i+1}))$.
    In this case, $a_{i}$ will envy $a_{i + 1}$.
\end{proof}


Consider an instance $(A, R, U)$ of \textsc{SAEF-Allocation} where $A = \{a_1, a_2, \dots, a_n\}$ is sorted by weights in ascending order.
Clearly, if an allocation $\pi$ satisfies that for any $1\leq i \leq n - 1$, $u(\pi(a_i)) \leq u(\pi(a_{i+1}))$ and $u(\pi(a_i))/w_i \geq u(\pi(a_{i+1})) / w_{i+1}$, then $\pi$ is a SumAvg-envy-free allocation.
By Property~\ref{property-1} and Property~\ref{property-2}, we have that $\pi$ is a SumAvg-envy-free allocation if and only if $\pi$ satisfies that for any $1\leq i \leq n - 1$, $u(\pi(a_i)) \leq u(\pi(a_{i+1}))$ and $u(\pi(a_i))/w_i \geq u(\pi(a_{i+1})) / w_{i+1}$.
Thus, in our algorithm, we search for an allocation satisfing that for any $1\leq i \leq n - 1$, $u(\pi(a_i)) \leq u(\pi(a_{i+1}))$ and $u(\pi(a_i))/w_i \geq u(\pi(a_{i+1})) / w_{i+1}$. We call such allocation \emph{feasible}.

Now we are ready to show our main algorithm.
\begin{theorem}\label{thm:P}
    \textsc{SAEF-Allocation} under identical and 0/1 preferences can be solved in $O(nm^3)$ time.
\end{theorem}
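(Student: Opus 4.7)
The plan is to reduce the problem to a one-dimensional integer-sequence feasibility question and then solve it by a straightforward dynamic program.

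First I would exploit the preference structure. Since preferences are identical and $0/1$, every bundle's utility equals the number of ``valuable'' (utility-$1$) resources it contains; let $k$ denote the total number of such resources. The utility-$0$ resources contribute nothing to any envy inequality, so they can be appended to any agent's bundle at the end to enforce completeness. Hence a SumAvg-envy-free complete allocation exists iff we can pick non-negative integers $x_1,\dots,x_n$ (representing $u(\pi(a_i))$) with $\sum_i x_i = k$ and, by the characterization derived from Properties~\ref{property-1} and~\ref{property-2}, satisfying the adjacent constraints
\[
x_i \le x_{i+1} \qquad \text{and} \qquad x_i \cdot w_{i+1} \ge x_{i+1}\cdot w_i
\]
for all $1 \le i \le n-1$ (the second inequality being $x_i/w_i \ge x_{i+1}/w_{i+1}$, cleared of fractions).

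Second, I would set up the dynamic program. Define a boolean table $f(i,s,t)$ which is true iff there exist $x_1,\dots,x_i$ respecting both chain constraints, with $\sum_{j\le i} x_j = s$ and $x_i = t$. The base case is $f(1,t,t)=\text{true}$ for every $t\in\{0,1,\dots,k\}$. The transition sets $f(i+1,s+t',t')$ true whenever some $t$ with $f(i,s,t)=\text{true}$ satisfies $t\le t'$ and $t\cdot w_{i+1}\ge t'\cdot w_i$. A feasible allocation exists iff $f(n,k,t)$ is true for some $t$, and the actual allocation is reconstructed by back-tracing the DP and then arbitrarily choosing, for each agent, $x_i$ utility-$1$ resources from those still available (with the utility-$0$ resources dumped into any single bundle). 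The state space is $O(n\cdot m \cdot m)$ and each cell consults $O(m)$ predecessors, giving total runtime $O(nm^3)$.

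Third, the main obstacle I expect is not the DP itself but making the reduction airtight. I would have to verify explicitly that the adjacent chain constraints imply SumAvg-envy-freeness for every pair $(i,j)$, not only for consecutive agents. This follows by chaining: iterating $x_i \le x_{i+1}$ along the sorted order gives $u(\pi(a_i))\le u(\pi(a_j))$ for all $i<j$, covering the Sum-condition of the SumAvg disjunction for one direction; iterating $x_i/w_i \ge x_{i+1}/w_{i+1}$ gives $u(\pi(a_i))/w_i \ge u(\pi(a_j))/w_j$ for $i<j$, covering the Avg-condition for the other direction. A short formal write-up of this transitivity, together with the handling of the utility-$0$ resources (which is immediate because they enter no envy inequality), is all that separates the DP from a complete proof.
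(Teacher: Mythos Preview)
Your proposal is correct and follows essentially the same route as the paper: both reduce to the chain characterization from Properties~\ref{property-1} and~\ref{property-2} and then run a dynamic program indexed by (agent index, total assigned so far, amount assigned to the last agent), with $O(nm^2)$ states and $O(m)$ transitions each. Your write-up is actually a bit more careful than the paper's---you separate out the utility-$0$ resources explicitly and spell out the transitivity argument for non-adjacent pairs---but the underlying algorithm and analysis are identical.
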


\begin{proof}
    Consider an instance $(A, R, U)$ of \textsc{SAEF-Allocation} where $A = \{a_1, a_2, \dots, a_n\}$ is sorted by weights in ascending order.
    For some $1\leq i\leq n, 1\leq j\leq m, 1\leq k\leq m$, we consider the following subproblem:
    to find a feasible allocation $\pi$ that allocates $j$ resources to agents $a_1, a_2, \dots, a_i$, where there are $k$ resources of the $j$ resources allocated to agent $a_i$. 
    We also let $c(i, j, k)$ denote the corresponding allocation.
    For some $1\leq i\leq n, 1\leq j\leq m, 1\leq k\leq m$, there may not exist any feasible allocation $\pi$, and we will let $c(i, j, k)=\emptyset$ for this case.
    To solve \textsc{SAEF-Allocation}, we only need to check the existence of the allocation among $c(i = n, j = m, k)$ for all possible $1\leq k\leq m$.
    Next, we use a dynamic programming method to compute all $c(i, j, k)$.

    For the case that $i = 1$, $j = k$, it trivially holds that $c(1, j, k)$ be the allocation that allocates $k$ resources to agent $a_1$.
    And for the case that $i = 1, j\neq k$, it trivially holds that $c(1, j, k) = \emptyset$.

    For every $2\leq i\leq n, 1\leq j\leq m, 1\leq k\leq m$,
    if there exists $k'\leq k$ such that $c(i - 1, j - k, k') \neq \emptyset$ and $k'/w_{i - 1} \geq k/w_i$ then
    $c(i, j, k)$ be the allocation that $c(i - 1, j - k, k')$ combined with allocating $k$ resources to agent $a_i$,
    otherwise $c(i, j, k) = \emptyset$.


    There are at most $nm^2$ different combinations of $(i, j, k)$.
    For each $2\leq i\leq n, 1\leq j\leq m$, and $1\leq k\leq m$,
    it takes at most $O(m)$ time to compute $c(i, j, k)$ by using the above recurrence relations.
    Therefore, our dynamic programming algorithm runs in $O(nm^3)$ time.
\end{proof}

\subsection{Few agents or few resources}

Although \textsc{SAEF-Allocation} under identical and 0/1 preferences can be solved in polynomial time,
Bouveret and Lang~\citep{bouveret2008efficiency} show that \textsc{SEF-Allocation} under identical preferences is NP-hard and \textsc{SEF-Allocation} under 0/1 preferences is also NP-hard.
Clearly, the hardness results for \textsc{SEF-Allocation} will imply the same hardness results for \textsc{AEF-Allocation} and \textsc{SAEF-Allocation}.
Thus, we turn to consider parameterized complexity for the hard problems.
We consider two parameters: the number $n$ of agents and the number $m$ of resources.
We first show that \textsc{SAEF-Allocation} is FPT with respect to the number of resources.

We encode our instance $(A, R, U)$ of \textsc{SAEF-Allocation} as an \textsc{Integer Programming} instance (SAEF-IP).
Let $A = \{a_1, a_2\dots, a_n\}$, $R = \{r_1, r_2, \dots, r_m\}$.
For some resource $r$, we define the \emph{type} of $r$ as a vector $t_r:= (u_1(r), u_2(r), \dots, u_n(r))$.
And let $T := \{t_r: r\in R\}$ be the set of types of all resources in $R$.
Let $t_r[i]$ be the value $u_i(r)$.
Let $\# t$ be the number of resources of type $t$.
Now we are ready to construct our ILP model.

For each agent $a_i\in A$ and each type $t\in T$,  we introduce a variable $x_i^t \in [m]$ to represent the number of resources of type $t$ allocated to agent $a_i$.
Since each feasible allocation is complete, we have the following constraint.
\begin{equation}
    \forall t\in T:\sum_{i\in[n]}x_i^t = \#t. %
\end{equation}
Since each feasible allocation is SumAvg-envy-free, we have the following two constraints, where at least one should be satisfied.
For each two agents $a_i, a_j\in A$,
\begin{equation}
    \sum_{t\in T}x_i^t t[i]\geq \sum_{t\in T}x_j^t t[i] %
\end{equation}
or
\begin{equation}
    w_j\sum_{t\in T}x_i^t t[i]\geq w_i\sum_{t\in T}x_j^t t[i]. %
\end{equation}

To represent this ``or'' constraint, we introduce a big number $M = \sum_{i\in [n], j\in [m]} u_i(j) \cdot \sum_{i\in [n]} w_i$.
Then, for each two agents $a_i, a_j\in A$, we introduce two variables $y_{ij}^1\in \{0, 1\}$ and $y_{ij}^2\in \{0, 1\}$ and introduce the following two constraints.
\begin{equation}
    M y_{ij}^1\leq M+(\sum_{t\in T}x_i^t t[i]- \sum_{t\in T}x_j^t t[i]), %
\end{equation}
\begin{equation}
    M y_{ij}^2\leq M+(w_j\sum_{t\in T}x_i^t t[i]- w_i\sum_{t\in T}x_j^t t[i]). %
\end{equation}
Consider the inequality (4), if $\sum_{t\in T}x_i^t t[i] \geq \sum_{t\in T}x_j^t t[i]$, we have that $M+(\sum_{t\in T}x_i^t t[i]- \sum_{t\in T}x_j^t t[i])\geq M$ and $y_{ij}^1$ can be 0 or 1.
If $\sum_{t\in T}x_i^t t[i] < \sum_{t\in T}x_j^t t[i]$, we have that $M+(\sum_{t\in T}x_i^t t[i]- \sum_{t\in T}x_j^t t[i])< M$ and $y_{ij}^1$ must be 0.
The similar arguments hold for $y_{ij}^2$.
Thus, we know that at least one of inequalities (2) and (3) is satisfied if and only if $y_{ij}^1 + y_{ij}^2 \geq 1$.
We have the following constraint.

\begin{equation}
    \forall i,j\in [n]: y_{ij}^1+y_{ij}^2\geq 1. %
\end{equation}

Now, we put all things together, and get the following \textsc{Integer Programming} instance (SAEF-IP).
\[
    \begin{split}
        & \mbox{Min }  1\\
        & \mbox{subject to:} \\
        & \forall t\in T:\sum_{i\in[n]}x_i^t = \#t\\
        & \forall i,j\in[n]: M y_{ij}^1<M+(\sum_{t\in T}x_i^t t[i]- \sum_{t\in T}x_j^t t[i])\\
        & \forall i,j\in[n]: M y_{ij}^2<M+(w_j\sum_{t\in T}x_i^t t[i]- w_i\sum_{t\in T}x_j^t t[i])\\
        & \forall i,j\in [n]: y_{ij}^1+y_{ij}^2\geq 1\\
        & \forall t\in T, i\in [n]: x_i^t \in [m]\\
        & \forall i,j\in [n]: y_{ij}^1\in \{0, 1\}\\
        & \forall i,j\in [n]: y_{ij}^2\in \{0, 1\}
    \end{split}
\]

\begin{theorem}\label{thm:fpt}
    \textsc{SAEF-Allocation} is fixed-parameter tractable with respect to the parameter ``number of resources''.
\end{theorem}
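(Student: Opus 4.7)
The plan is to apply Lenstra's theorem on integer linear programming feasibility (with $d$ variables and input length $L$, ILP feasibility runs in $f(d)\cdot\mathrm{poly}(L)$ time) to the already-constructed SAEF-IP. Since SAEF-IP has $n\cdot|T|+2n^2$ variables and $|T|\le m$ trivially, the real obstacle is to bound the agent count $n$ by a function of the parameter $m$; once that is done, the running time becomes $g(m)\cdot\mathrm{poly}(L)$, which is the desired FPT bound.

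To bound $n$, I would first dispense with ``null'' agents, i.e. agents $a_i$ whose utility function is identically $0$. Any such agent can, without loss of generality, be assigned the empty bundle: by monotonicity it envies nobody (its utility is $0$ on every bundle), and, conversely, nobody envies its empty bundle because that bundle is worth $0$ to each other agent. Hence I may delete all null agents from the instance and re-insert them with empty bundles after solving.

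For the remaining non-null agents I would argue that any complete SAEF allocation must give each of them at least one resource, so $n\le m$. Indeed, if a non-null agent $a_i$ received the empty bundle, then a witness resource $r^{*}$ with $u_i(r^{*})>0$ would belong to some $\pi(a_k)$, forcing $u_i(\pi(a_k))\ge u_i(r^{*})>0=u_i(\pi(a_i))$; dividing by the positive weights preserves the strict inequality, so $a_i$ envies $a_k$ under both the Sum and the Avg condition, contradicting SAEF.

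I expect this reduction to $n\le m$ to be the crux. Once it is in hand, SAEF-IP has $O(m^2)$ variables, and Lenstra's algorithm settles its feasibility (and returns a witnessing assignment of the $x_i^t$'s) in $g(m)\cdot\mathrm{poly}(L)$ time, from which a full SAEF allocation (extended trivially on the null agents) is immediately recovered; the remaining correctness check, namely that the ``or'' between inequalities (2) and (3) is faithfully captured by the big-$M$ encoding and the constraint $y_{ij}^1+y_{ij}^2\ge 1$, has already been verified in the setup preceding the theorem.
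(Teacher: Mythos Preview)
Your approach---apply Lenstra's theorem to the already-constructed SAEF-IP---is exactly the paper's approach. In fact, you are more careful than the paper: the paper's proof simply asserts that the number of variables in SAEF-IP is ``clearly'' bounded by a function of $m$ and invokes Lenstra, whereas you correctly observe that this requires bounding $n$ by a function of $m$ and supply the missing argument (remove null agents; any remaining agent must receive a non-empty bundle in a complete SAEF allocation, so $n\le m$ or the instance is a \textsc{No}-instance).
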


\begin{proof}
    Clearly, the number of variables in (SAEF-IP) is upper bounded by a function of the number $m$ of resources in an instance of \textsc{SAEF-Allocation}.
    The result is a consequence of applying the celebrated result of Lenstra~\citep{lenstra1983integer} for ILP models with a bounded number of variables.
\end{proof}

As a corollary, we show that the FPT result also holds for the case of 0/1 preferences.


\begin{corollary}\label{coro:SumAvgfpt}
    \textsc{SAEF-Allocation} under 0/1 preferences is fixed-parameter tractable with respect to the parameter ``number of agents''.
\end{corollary}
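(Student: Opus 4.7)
The plan is to reuse the integer programming formulation (SAEF-IP) built in the proof of Theorem~\ref{thm:fpt} and show that, under 0/1 preferences, its number of variables is bounded by a function of $n$ alone, so that Lenstra's algorithm yields the desired FPT running time.

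The key observation is that the ``type'' of a resource is the vector $t_r = (u_1(r), u_2(r), \dots, u_n(r)) \in \mathbb{Z}^n$ formed by how each agent values it. Under 0/1 preferences, every coordinate of this vector is in $\{0,1\}$, hence the set $T$ of distinct resource types satisfies $|T| \leq 2^n$. This is the crucial structural fact that is absent in the general case.

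First, I would identify all distinct types present in the input and compute $\#t$ for each $t \in T$; this is a straightforward pass over the resource list. Next, I would write down the ILP (SAEF-IP) from the proof of Theorem~\ref{thm:fpt} with these precomputed values. The variables are the $x_i^t$ (one per agent/type pair, hence at most $n \cdot 2^n$ of them) together with the binary ``OR-encoding'' variables $y_{ij}^1, y_{ij}^2$ (at most $2n^2$ of them). The total number of integer variables is therefore upper bounded by a computable function of $n$, independent of $m$.

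Finally, I would invoke Lenstra's theorem~\citep{lenstra1983integer}: an ILP with $p$ variables can be solved in time polynomial in the input size for every fixed $p$. Since the coefficients of the ILP (the $\#t$, the $t[i] \in \{0,1\}$, the weights $w_i$, and the constant $M$) have bit-length polynomial in the input size of the original \textsc{SAEF-Allocation} instance, the overall running time is $f(n) \cdot \mathrm{poly}(|I|)$, establishing the claimed FPT result. I do not anticipate a real obstacle here, since the only genuinely new ingredient beyond Theorem~\ref{thm:fpt} is the bound $|T| \leq 2^n$, which is immediate from the 0/1 restriction; everything else is a direct citation of the earlier construction.
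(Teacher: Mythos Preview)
Your proposal is correct and mirrors the paper's own argument exactly: bound the number of resource types by $2^n$ under 0/1 preferences, observe that the variable count of (SAEF-IP) then depends only on $n$, and invoke Lenstra's result. If anything, your write-up is more explicit than the paper's (you spell out the $n\cdot 2^n$ and $2n^2$ variable counts), but the approach is identical.
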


\begin{proof}
    Note that under 0/1 preferences, there are at most $2^n$ different resources types.
    Thus, the number of variables in (SAEF-IP) is upper bounded by a function of the number $n$ of resources in an instance of \textsc{SAEF-Allocation}.
    The result is a consequence of applying the celebrated result of Lenstra~\citep{lenstra1983integer} for ILP models with a bounded number of variables.
\end{proof}

For \textsc{AEF-Allocation},
we encode our instance $(A, R, U)$ as an \textsc{Integer Programming} instance (AEF-IP) by a similar way.
\[
    \begin{split}
        \mbox{Min } &  1\\
        \mbox{subject to: } & \forall t\in T:\sum_{i\in[n]}x_i^t = \#t\\
        & \forall w_j\sum_{t\in T}x_i^t t[i]\geq w_i\sum_{t\in T}x_j^t t[i]\\
        & \forall t\in T, i\in [n]: x_i^t \in [m]
    \end{split}
\]
And by similar arguments, we give the following corollaries without proofs.

\begin{corollary}\label{coro:avgfptm}
    \textsc{AEF-Allocation} is fixed-parameter tractable with respect to the parameter ``number of resources''.
\end{corollary}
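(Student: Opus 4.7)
The plan is to mirror the proof of Theorem~\ref{thm:fpt}. First, encode the given instance of \textsc{AEF-Allocation} as the (AEF-IP) model displayed just above the corollary. Its variables are $x_i^t$ for each agent $a_i$ and each resource type $t \in T$, together with one equation per type enforcing completeness and one linear inequality per ordered pair of agents enforcing Avg-envy-freeness. Unlike (SAEF-IP), no auxiliary Boolean $y$ variables and no big-$M$ tricks are needed, since Avg-envy-freeness is a single linear inequality rather than a disjunction, so the model is strictly simpler than the one used for Theorem~\ref{thm:fpt}.

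Next, bound the number of variables by a function of $m$ alone. Since two resources with identical utility profiles collapse into one type, $|T| \leq m$. Moreover, at most $m$ agents can receive a non-empty bundle in any complete allocation; a short preprocessing step handles agents who must receive $\emptyset$ by checking their compatibility, i.e.\ that every other agent's bundle is worth $0$ to them. After this reduction, we may assume $n \leq m$ effective agents appear in the ILP, so the total number of variables in (AEF-IP) is $n \cdot |T| = O(m^2)$, bounded by a function of the parameter $m$.

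Finally, invoke Lenstra's algorithm~\citep{lenstra1983integer}, which solves any ILP with $p$ variables in time $p^{O(p)}$ times a polynomial in the input bit-length. Applied with $p = O(m^2)$, this yields an FPT algorithm in the parameter ``number of resources'', proving the corollary.

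The main obstacle I expect is making the $n$-to-$O(m)$ reduction rigorous: one must verify that grouping and discarding agents who necessarily receive $\emptyset$ preserves both completeness and Avg-envy-freeness, and that the residual constraints imposed by these agents (namely that the bundles of active agents carry zero utility for them) can be encoded as side conditions restricting which resource types may appear in which active agent's bundle. Once this preprocessing is in place, the rest is a direct transplant of the ILP argument used for Theorem~\ref{thm:fpt}, with a slightly cleaner model because the disjunctive envy condition has disappeared.
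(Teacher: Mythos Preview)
Your proposal is correct and follows exactly the route the paper intends: build the (AEF-IP) model, bound the number of variables by a function of $m$, and invoke Lenstra~\citep{lenstra1983integer}. The paper does not even spell this out---it simply states ``by similar arguments'' after displaying (AEF-IP)---so your write-up is in fact more complete than what appears in the text.

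One remark on the obstacle you flag. The $n$-to-$O(m)$ reduction is easier than you fear, and the ``residual constraints'' you mention actually vanish. Under additive non-negative utilities and a complete allocation, an agent with an empty bundle is Avg-envy-free only if she values \emph{every} resource at $0$ (otherwise some nonempty bundle has positive value to her and she envies its owner). Hence the preprocessing is simply: discard all agents whose utility vector is identically zero; if more than $m$ agents remain, answer ``no''; otherwise run (AEF-IP) on the $\leq m$ surviving agents. No side conditions on resource types are needed, because the discarded agents impose no constraint whatsoever on the active bundles. The paper glosses over this entirely (its ``clearly'' in the proof of Theorem~\ref{thm:fpt} hides the same issue), so your instinct to make it explicit is well placed, but the fix is a two-line argument rather than a genuine obstacle.
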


\begin{corollary}\label{coro:avgfptn}
    \textsc{AEF-Allocation} under 0/1 preferences is fixed-parameter tractable with respect to the parameter ``number of agents''.
\end{corollary}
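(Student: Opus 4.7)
The plan is to mirror the argument used in Corollary \ref{coro:SumAvgfpt} but applied to the simpler (AEF-IP) model instead of (SAEF-IP). The key observation is that under 0/1 preferences, every resource $r \in R$ has a type vector $t_r = (u_1(r), u_2(r), \dots, u_n(r))$ whose entries all lie in $\{0,1\}$, so the total number of distinct types satisfies $|T| \le 2^n$.

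First I would note that the variables of (AEF-IP) are exactly $\{x_i^t : i \in [n], t \in T\}$, so the total number of variables is at most $n \cdot 2^n$, a function of $n$ alone. Second, I would observe that all other input quantities appearing in the constraints (the weights $w_i$, the utility values $t[i]$, and the multiplicities $\#t$) contribute only to the bit-length of the coefficients and bounds, not to the variable count, so the model remains small in the sense relevant for Lenstra's theorem. Third, I would invoke Lenstra's algorithm~\citep{lenstra1983integer} for integer programming with a bounded number of variables, which runs in time polynomial in the input size once the number of variables is fixed, to obtain an FPT algorithm with parameter $n$.

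Since the correctness of (AEF-IP) as a formulation of \textsc{AEF-Allocation} is already implicit in the construction given before Corollary \ref{coro:avgfptm}, no new correctness argument is needed. The only thing I would be a bit careful about is the route from the $2^n$ type bound to the bounded-variable ILP statement; this is the same step as in Corollary \ref{coro:SumAvgfpt} and is essentially routine once the type-vector observation is in place, so I do not anticipate a real obstacle here.
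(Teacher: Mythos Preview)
Your proposal is correct and follows exactly the approach the paper intends: the paper explicitly states that Corollaries~\ref{coro:avgfptm} and~\ref{coro:avgfptn} are given ``without proofs'' and follow ``by similar arguments'' to the \textsc{SAEF} case, i.e., bounding $|T|\le 2^n$ under 0/1 preferences, noting that (AEF-IP) then has at most $n\cdot 2^n$ variables, and invoking Lenstra's algorithm. There is nothing to add.
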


\section{House Allocations}
In this section, we consider house allocations. Recall that previous and our results are presented in Table \ref{tab:results2}.



\textsc{SEF-house-Allocation} and \textsc{AEF-house-Allocation} under additive monotonic preferences can be solved in polynomial time~\citep{gan2019envy,dai2024weighted}.
Surprisingly, we demonstrate that \textsc{SAEF-house-Allocation} under additive monotonic preferences is NP-hard, even when there are only two types of weights.
The hardness result is obtained by reducing from the classic NP-complete problem \textsc{3-SAT}~\citep{hartmanis1982computers}.

\noindent\rule{\linewidth}{0.2mm}
\textsc{3-SAT}\\ 
\textbf{Instance:} a set of clauses $C = \{c_1, \dots, c_m\}$ defined over a set of variables $X = \{x_1, \dots, x_n\}$ such that each clause is disjunctive and consists of 3 literals.\\
\textbf{Task:}  Determine whether there exists an assignment of the variables which satisfies all the clauses.\\
\rule{\linewidth}{0.2mm}


\begin{theorem}\label{thm:NP-h}
    \textsc{SAEF-House-Allocation} under additive monotonic preferences is NP-hard even when there are only two types of weights and three types of numbers in utility functions.
\end{theorem}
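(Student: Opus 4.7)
The plan is to reduce from \textsc{3-SAT}. Given an instance with clauses $C = \{c_1, \ldots, c_m\}$ over variables $X = \{x_1, \ldots, x_n\}$, I would construct a house allocation instance whose agents carry only two distinct weights $w_L < w_H$ and whose utility functions range over only three distinct values $\{0, v_1, v_2\}$ with $0 < v_1 < v_2$. The construction uses two kinds of gadgets: a \emph{variable gadget} for each $x_i$, admitting exactly two internally SAEF-compatible allocations that encode a truth assignment; and a \emph{clause gadget} for each $c_j$, containing a ``clause agent'' whose envy constraints can be simultaneously satisfied only if at least one of its three literals is set to true by the variable gadgets.

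The key design principle is to exploit the disjunctive structure of SAEF. Since agent $a_i$ is non-envious of $a_j$ whenever \emph{either} the sum condition or the weighted condition holds, we can tune $v_1, v_2, w_L, w_H$ so that a light agent (weight $w_L$) holding a $v_1$-house need not envy a heavy agent (weight $w_H$) holding a $v_2$-house, provided the weight ratio $w_H/w_L$ exceeds $v_2/v_1$: the averaged inequality $v_1/w_L \geq v_2/w_H$ then holds while the sum inequality fails. By contrast, between two agents of equal weight the two conditions collapse and behave exactly like classical Sum-envy-freeness; and a heavy agent holding a low-value house will envy another heavy agent holding a high-value house under \emph{both} conditions. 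This weight-class asymmetry is exactly what lets a variable gadget built from heavy agents force a binary choice, and lets a clause gadget centered on a light clause agent rule out exactly the configurations corresponding to unsatisfied clauses.

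For the forward direction (satisfiability implies an SAEF allocation), I would translate a satisfying assignment into an explicit allocation and verify SAEF pairwise: within each gadget envy is avoided by construction, and across gadgets at least one of the two inequalities holds trivially because at least one of the two houses involved has utility $0$, making the sum comparison collapse. The main obstacle will be the reverse direction: proving that \emph{any} SAEF allocation of the constructed instance must decompose into one of the two canonical configurations on each variable gadget plus a clause allocation whose ``used'' literal in each clause is true. Establishing this requires a structural case analysis showing that no ``mixed'' or cross-gadget allocation can simultaneously satisfy every pairwise SAEF constraint, using the fact that with only three utility values and two weights each pairwise constraint reduces to one of a small, enumerable set of numeric inequalities. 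The delicate part is pinning down $w_L, w_H, v_1, v_2$ (for instance $v_1 = 1$, $v_2$ large, and $w_H/w_L$ chosen strictly between $v_2/v_1$ and larger thresholds arising from the gadget internals) so that all the desired inequalities hold and all the undesired ones fail simultaneously.
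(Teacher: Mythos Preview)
Your high-level approach coincides with the paper's: reduce from \textsc{3-SAT}, use exactly two weights $\{1,M\}$ and three utility values $\{0,1,M\}$, and exploit the disjunctive nature of SAEF via the observation that a weight-$1$ agent holding a utility-$1$ house does not envy a weight-$M$ agent holding a utility-$M$ house because $1/1 \ge M/M$, while it \emph{does} envy a weight-$1$ agent holding such a house. This is precisely the mechanism the paper uses.

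Where your sketch diverges from a working construction is in the gadget internals. You propose variable gadgets ``built from heavy agents'' and a clause gadget ``centered on a light clause agent''. But if both agents in a variable gadget are heavy, then whichever resource of the pair goes to whichever agent, a light clause agent looking at the negated-literal resource always sees it held by a heavy agent and is never forced to envy; the truth value is invisible to it. The paper's variable gadget therefore mixes weights: one light agent $a_{x,i}$ (weight $1$) and one heavy agent $\bar a_{x,i}$ (weight $M$) contend for the two literal-resources, so that ``$x_i$ true'' is encoded precisely by \emph{which weight class} holds $\bar r_{x,i}$. Likewise, a single light clause agent is not enough: in a house allocation that agent must itself receive one house, and you need a consistent way to free up a clause resource while keeping everybody envy-free. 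The paper handles this with four agents per clause: three light agents $a_{c,j}^1,a_{c,j}^2,a_{c,j}^3$ (each valuing its own $r_{c,j}^k$ and the relevant $\bar r_{l(j,k)}$ at $M$ and the dummy $r_{c,j}^*$ at $1$) plus a heavy agent $a_{c,j}^*$ that absorbs whichever $r_{c,j}^k$ is vacated; the light agent who takes the dummy is non-envious iff the high-value resources it covets are all held by heavy agents, which happens exactly when its literal is true. Your outline correctly anticipates the numeric tuning and the two-direction argument, but you will need these mixed-weight gadgets (or something equivalent) to make the reverse direction go through.
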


\begin{proof}
    We will show a polynomial-time reductions from \textsc{3-SAT} to \textsc{SAEF-House-Allocation} under additive monotonic preferences.
    Specifically, consider a \textsc{3-SAT} instance $(C = \{c_1, \dots, c_m\}, X = \{x_1, \dots, x_n\})$, we construct an equivalent \textsc{SAEF-House-Allocation} instance $(A = A_x\cup A_c, R = R_x\cup R_c, U = U_x\cup U_c)$ as follows.


    Let $M$ be a large constant number. We ensure that the maximum number in utility functions is $M$.

        \textit{(a) Variable gadgets:}
        For each variable $x_i\in X$, we construct two agents $a_{x, i}\in A_x$ and $\bar a_{x, i}\in A_x$, and two resources $r_{x, i}\in R_x$ and $\bar r_{x, i}\in R_x$.
        Let $\overline{\overline{r}}_{x, i} =  r_{x, i}$.
        Let
        \[
            u_{a_{x, i}}(r_{x, i}) = u_{a_{x, i}}(\bar r_{x, i}) = 1.
        \]
        and
        \[
        u_{\bar a_{x, i}}(r_{x, i}) = u_{\bar a_{x, i}}(\bar r_{x, i}) = M.
        \]
        And for any other possible resource $r$, let
        \[
            u_{a_{x, i}}(r) = u_{\bar a_{x, i}}(r) = 0.
        \]
        Let $w_{a_{x, i}} = 1$ and $w_{\bar a_{x, i}} = M$.
        Clearly, consider any SumAvg-free-allocation $\pi$, one of $r_{x, i}$ and $\bar r_{x, i}$ will be allocated to $a_{x, i}$ and the other will be allocated to $\bar a_{x, i}$.
        If agent $a_{x, i}$ is allocated resource $r_{x, i}$, it can be interpreted in 3-SAT as setting variable $x_i$ to true.
        Similarly, if agent $a_{x, i}$ is allocated resource $\bar r_{x, i}$, it can be interpreted in 3-SAT as setting variable $x_i$ to false.

        \textit{(b) Clause gadgets:} 
        For each clause $c_j$, we construct four agents $a_{c, j}^1, a_{c, j}^2, a_{c, j}^3$ and $a_{c, j}^*\in A_c$. And we construct four resources $r_{c, j}^1, r_{c, j}^2, r_{c, j}^3$ and $r_{c, j}^*\in R_c$.
        Let $c_j = l(j, 1)\vee l(j, 2)\vee l(j, 3)$, where $l(j, k)$ respects the $k$-th literal in $c_j$.
        We use $r_{l(j, k)}$ to denote the resource corresponding to the $k$-th literal in $c_j$.
        For example, if $c_1 = x_1\vee x_2 \vee \bar x_n$, then $l(1, 3) = \bar x_n$, $r_{l(1, 3)} = \bar r_{x, n}$ and $\bar r_{l(1, 3)} = r_{x, n}$.

        Consider the utility functions for $a_{c, j}^k$ ($k = 1, 2, 3$), we construct them as follows.
        Let
        \[
            u_{a_{c, j}^k}(r_{c, j}^k) = u_{a_{c, j}^k}(\bar r_{l(j, k)}) = M,
        \]
        and\\
        \[
            u_{a_{c, j}^k}(r_{c, j}^*) = 1.
        \]
         
         For any other possible resource $r$, let
        $u_{a_{c, j}^k}(r) = 0$.
        
        For their weights, let $w_{a_{c, j}^1} = w_{a_{c, j}^2} = w_{a_{c, j}^3} = 1$.

        Consider the utility functions for $a_{c, j}^*$, let
        \[
            u_{a_{c, j}^*}(r_{c, j}^1) = u_{a_{c, j}^*}(r_{c, j}^2) = u_{a_{c, j}^*}(r_{c, j}^3) = M.
        \]
        
        For any other possible resource $r$, let $u_{a_{c, j}^*}(r) = 0$.
        For the weight, let $w_{a_{c, j}^*} = M$.

        \begin{figure}[!t]
            \centering
            \includegraphics[width = 6cm]{./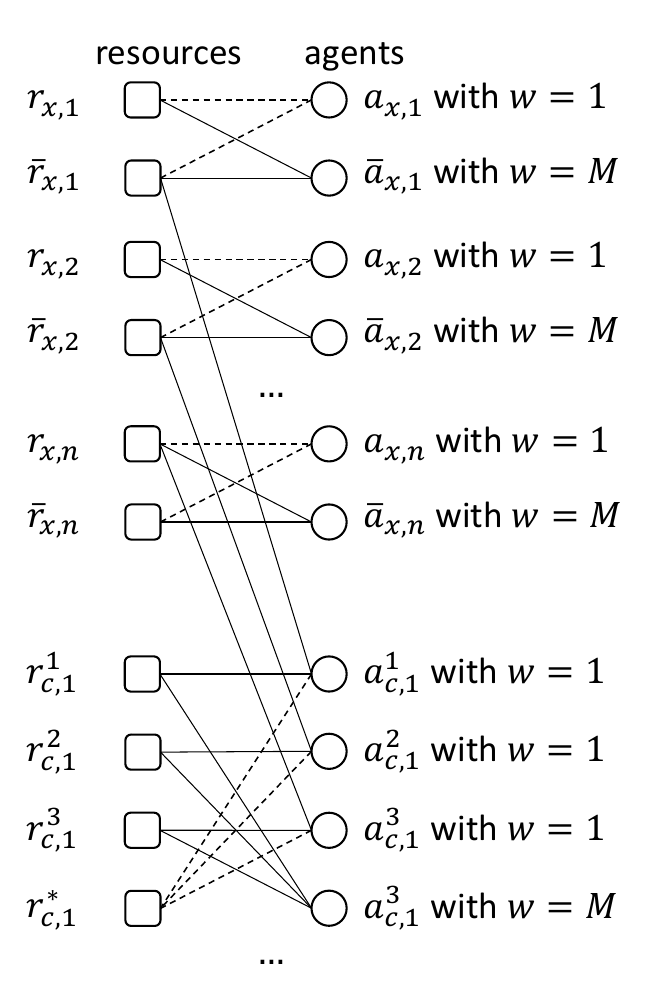}
            \caption{
                An illustration for $(A, R, U)$, where $r_1 = x_1\vee x_2\vee \bar x_n$.
                For each agent $a$ and resource $r$, (i) there is a solid edge between $a$ and $r$ if and only if $u_a(r) = M$; (ii) there is a dashed edge between $a$ and $r$ if and only if $u_a(r) = 1$; (ii) there is no edge between $a$ and $r$ if and only if $u_a(r) = 0$.
            }
            \label{pic:reduction}
        \end{figure}

    We finish our construction of the \textsc{SAEF-House-Allocation} instance.
    See Fig.~\ref{pic:reduction} for an illustration.
    Now, we show that $(A, R, U)$ is a yes-instance of \textsc{SAEF-House-Allocation} if and only if $(C, X)$ is a yes-instance of \textsc{3-SAT}.
    
    Assume $(C, X)$ is a yes-instance, we construct a SumAvg-envy-free house allocation $\pi$ as follows.
    For each variable $x_i$, if $x_i$ is true, then $\pi(a_{x,i}) = r_{x, i}$ and $\bar \pi(\bar a_{x,i}) = \bar r_{x, i}$.
    Otherwise, $\pi(a_{x,i}) = \bar r_{x, i}$ and $\pi(\bar a_{x,i}) = r_{x, i}$.
    For each clause $c_j = l(j, 1)\vee l(j, 2)\vee l(j, 3)$, if $l(j, 1)$ is true, then $\pi(a_{c, j}^1) = r_{c, j}^*$.
    Otherwise, $\pi(a_{c, j}^1) = r_{c, j}^1$.
    Similarly, if $l(j, 2)$ (resp. $l(j, 3)$) is true and $r_{c, j}^*$ is not be allocated, then $\pi(a_{c, j}^2)$ (resp. $\pi(a_{c, j}^3)$) $= r_{c, j}^*$.
    otherwise, $\pi(a_{c, j}^2) = r_{c, j}^2$ (resp. $\pi(a_{c, j}^3) = r_{c, j}^3$).
    Since $(C, X)$ is a yes-instance, we know that at least one literal in $l(j, 1), l(j, 2)$ and $l(j, 3)$ should be true.
    Let $l(j, k)$ ($k = 1, 2, 3$) be the first literal that be true, which means that $\pi(a_{c, j}^k) = r_{c, j}^*$, then $\pi(a_{c, j}^*) = r_{c, j}^k$.

    We show that $\pi$ is a SumAvg-envy-free allocation as follows.
    Firstly, for some agent $a_i$ allocated with a resource with utility $M$, $a_i$ will not envy any other agent, since $a_i$ is allocated with the largest utility in all possible utilities.
    Thus, the only possible envy will occurs in the agent $a_{c, j}^k$ allocated with $r_{c, j}^*$ for some $j$ and $k$.
    Note that when $a_{c, j}^k$ is allocated with $r_{c, j}^*$,
    we know that $r_{c, j}^k$ is allocated to $a_{c, j}^*$ and $\bar r_{l(j, k)}$ is allocated to $\bar a_{x, i}$ since $l(j, k)$ is true.
    Note that the weights of agent $a_{c, j}^*$ and agent $\bar a_{x, i}$ are both $M$.
    We have that agent $a_{c, j}^k$ will not envy agent $a_{c, j}^*$ and agent $\bar a_{x, i}$ since $1 / 1 \geq M / M$.
    Thus, $\pi$ is a SumAvg-envy-free allocation.

    Assume $(A, R, U)$ is a yes-instance and let $\pi$ be a SumAvg-envy-free house allocation. We show that $(C, X)$ is a yes-instance.
    We construct the truth assignment as follows.
    Clearly, one of $r_{x, i}$ and $\bar r_{x, i}$ will be allocated to $a_{x, i}$ and the other will be allocated to $\bar a_{x, i}$.
    If agent $a_{x, i}$ is allocated resource $r_{x, i}$, we set $x_i$ to true, otherwise we set $x_i$ to false.
    By contradiction, we assume there exists a clause $c_j = l(j, 1)\vee l(j, 2)\vee l(j, 3)$ such that $l(j, 1) = l(j, 2) = l(j, 3) =$ false.
    In this case, we know that the corresponding resources of $\bar l(j, 1)$, $\bar l(j, 2)$ and $\bar l(j, 3)$ are allocated to agents with weights 1.
    Since there must be a $k = 1, 2, 3$ such that $r_{c, j}^k$ is allocated to $a_{c, j}^*$, there must be an agent $a_{c, j}^k$ allocated with $r_{c, j}^*$.
    However, $a_{c, j}^k$ will envy the agent allocated with $\bar r_{l(j, k)}$ since the weights of these two agents are both 1 and $a_{c, j}^k$ perfer $\bar r_{l(j, k)}$ than $a_{c, j}^*$, which leads a contradiction.

    Thus, this theorem holds.
\end{proof}

Now we show that under more restricted preferences, \textsc{SAEF-house-Allocation} can be solved in polynomial time.
Under 0/1 preferences, we have the following observation.

\begin{observation}\label{obs:house01P}
    \textsc{SAEF-House-Allocation} under 0/1 preferences can be solved in polynomial time.
\end{observation}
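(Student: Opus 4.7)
The plan is to show that, in the house-allocation setting under $0/1$ preferences, SumAvg-envy-freeness collapses to ordinary Sum-envy-freeness. Once this is established, we can directly invoke the polynomial-time algorithm of Gan et al.~\citep{gan2019envy} for \textsc{SEF-House-Allocation} under $0/1$ preferences to finish the proof.

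First I would observe the following. In a house allocation each agent receives exactly one resource, so for every pair $a_i, a_j$ of agents we have $u_i(\pi(a_i)), u_i(\pi(a_j)) \in \{0, 1\}$. The only case in which the Sum condition $u_i(\pi(a_i)) \geq u_i(\pi(a_j))$ fails is therefore $u_i(\pi(a_i)) = 0$ and $u_i(\pi(a_j)) = 1$. Next I would check that in this very case the Avg condition also fails: it reduces to $0 = 0/w_i \geq 1/w_j$, which is false since $w_j > 0$. Consequently, under $0/1$ preferences in a house allocation, the Sum condition and the Avg condition are either both satisfied or both violated for every ordered pair of agents, so an allocation is SumAvg-envy-free if and only if it is Sum-envy-free.

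Given this equivalence, the problem \textsc{SAEF-House-Allocation} restricted to $0/1$ preferences coincides exactly with \textsc{SEF-House-Allocation} restricted to $0/1$ preferences, which is known to be solvable in polynomial time by the envy-free house allocation algorithm of Gan et al.~\citep{gan2019envy}. Thus, the same algorithm solves \textsc{SAEF-House-Allocation} under $0/1$ preferences in polynomial time.

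There is no real obstacle here; the entire content of the observation is the collapse of the SAEF definition to the SEF definition on $0/1$ instances, which is a short case analysis on the two possible utility values. The proof is essentially a one-line reduction together with a citation of the existing polynomial-time result.
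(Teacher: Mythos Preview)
Your proposal is correct and follows exactly the paper's approach: observe that under $0/1$ house allocation the Sum condition fails only in the $(0,1)$ case, in which the Avg condition also fails, so SAEF collapses to SEF and one invokes Gan et al.\ \citep{gan2019envy}. One tiny quibble (shared with the paper's own proof): the intermediate claim that Sum and Avg are ``either both satisfied or both violated'' is slightly too strong (when both utilities equal $1$ but $w_i>w_j$, Sum holds while Avg fails), but this does not affect the conclusion since $\text{Avg}\Rightarrow\text{Sum}$ already yields $\text{SAEF}\Leftrightarrow\text{SEF}$.
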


\begin{proof} 
    Consider any allocation $\pi$.
    Since each agent is allocated exactly one resource, for any pair $a_i$ and $a_j$ of agents, we know that 
    $\frac{u_i(\pi(a_i))}{w_i}\geq \frac{u_i(\pi(a_j))}{w_j}$ if and only if $ u_i(\pi(a_i))\geq u_i(\pi(a_j))$.
    In this case, this problem is equivalent to \textsc{SEF-house-Allocation}.
    Since \textsc{SEF-house-Allocation} under 0/1 preferences can be solved in polynomial time~\citep{gan2019envy}, we have that \textsc{SAEF-House-Allocation} under 0/1 preferences can be solved in polynomial time.
\end{proof}

Under identical preferences, we can design a dynamic programming algorithm for \textsc{SAEF-house-Allocation}, which is similar to the algorithm given in Theorem \ref{thm:P}.

\begin{theorem}\label{thm:houseidP}
    \textsc{SAEF-house-Allocation} under identical preferences can be solved in $O(nm^2)$ time. 
\end{theorem}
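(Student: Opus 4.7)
The plan is to adapt the dynamic programming approach used in Theorem~\ref{thm:P}, but exploit the extra structure that in house allocation each agent receives exactly one resource. Under identical preferences, let $u$ denote the common utility function. Properties~\ref{property-1} and~\ref{property-2} still apply: sorting $A$ so that $w_1 \le w_2 \le \dots \le w_n$, any SumAvg-envy-free house allocation $\pi$ must satisfy $u(\pi(a_i)) \le u(\pi(a_{i+1}))$ and $u(\pi(a_i))/w_i \ge u(\pi(a_{i+1}))/w_{i+1}$ for every $i$, and (as noted in the text preceding Theorem~\ref{thm:P}) these two monotonicity conditions are also sufficient. Hence the task reduces to choosing $n$ of the $m$ resources and ordering them in a way consistent with both monotonicity conditions.

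Next I would sort the resources in non-decreasing order of utility, relabelling so that $u(r_1) \le u(r_2) \le \dots \le u(r_m)$. Since the required utilities of the chosen bundle are non-decreasing in agent index, any feasible house allocation picks indices $i_1 < i_2 < \dots < i_n$ from $[m]$ and assigns $r_{i_k}$ to $a_k$. The problem therefore becomes: choose such a chain so that additionally $u(r_{i_k})/w_k \ge u(r_{i_{k+1}})/w_{k+1}$ for all $k = 1, \dots, n-1$.

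I would solve this chain-selection problem by a two-dimensional DP. Define $f(i,k)$ to be true if and only if there exists an assignment of $k$ resources from $\{r_1, \dots, r_i\}$ to $\{a_1, \dots, a_k\}$ satisfying both monotonicity conditions and in which $r_i$ is the resource given to $a_k$. The base case is $f(i,1)=\text{true}$ for every $i \in [m]$. The recurrence is
\[
    f(i,k) \;=\; \bigvee_{j<i}\Bigl[f(j,k-1)\ \wedge\ u(r_j)/w_{k-1} \ge u(r_i)/w_k\Bigr].
\]
A SumAvg-envy-free house allocation exists iff $f(i,n)$ holds for some $i \in [m]$, and the actual assignment can be recovered by back-pointers.

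For the complexity, there are $O(nm)$ entries and each takes $O(m)$ time to fill, giving the claimed $O(nm^2)$ bound; sorting is dominated by this. The only subtlety I anticipate is verifying that the recurrence is correct at the boundary between Property~\ref{property-1} and Property~\ref{property-2}, in particular that comparing utility-to-weight ratios with consecutive agents (rather than all pairs) suffices; this follows because, once both monotonicity conditions hold along consecutive pairs, they propagate transitively to all pairs $(a_i, a_j)$ with $i<j$, and the converse direction $(j<i)$ is handled symmetrically by the same two inequalities, so the local DP check is sufficient to certify global SumAvg-envy-freeness.
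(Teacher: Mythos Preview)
Your proposal is correct and is essentially the same argument as the paper's: both sort agents by weight and resources by utility, invoke Properties~\ref{property-1} and~\ref{property-2} to reduce to a chain-selection problem, and solve it with a two-index DP $f(i,k)$ (the paper's $c(k,i)$) that checks only the consecutive ratio constraint, yielding $O(nm)$ states and $O(m)$ transitions. The only cosmetic differences are the order of the indices and that you record a boolean with back-pointers whereas the paper stores the partial allocation directly.
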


\begin{proof}
    Consider an instance $(A, R, U)$ of \textsc{SAEF-house-Allocation} where $A = \{a_1, a_2, \dots, a_n\}$ is sorted by weights in ascending order and $R = \{r_1, r_2, \dots, r_m\}$ is sorted by utilities in $u$ in ascending order.
    Firstly, by similar arguments, it is not hard to see Property~\ref{property-1} and Property~\ref{property-2} still hold for \textsc{SAEF-house-Allocation} under identical preferences. 
    Thus, our algorithm search for an allocation satisfing that for any $1\leq i \leq n - 1$, $u(\pi(a_i)) \leq u(\pi(a_{i+1}))$ and $u(\pi(a_i))/w_i \geq u(\pi(a_{i+1})) / w_{i+1}$.
    We still call such allocation \emph{feasible}. 

    For some $1\leq i\leq n, 1\leq j\leq m$, we consider the following subproblem:
    to find a feasible allocation $\pi$ that allocate resources from first $j$ resources to agents $a_1, a_2, \dots, a_i$ and allocate resource $r_j$ to agent $a_i$.
    We also let $c(i, j)$ denote the corresponding allocation.
    For some $1\leq i\leq n, 1\leq j\leq m$, there may not exist any feasible allocation $\pi$, and we will let $c(i, j)=\emptyset$ for this case.
    To solve \textsc{SAEF-house-Allocation}, we only need to check the existence of the allocation among $c(i = n, j)$ for all possible $n\leq j\leq m$.
    Next, we use a dynamic programming method to compute all $c(i, j)$.

    For the case that $i = 1$, for any $1\leq j\leq m$, it trivially holds that $c(1, j)$ be the allocation that allocates resource $r_j$ to agent $a_1$.

    For every $2\leq i\leq n, 1\leq j\leq m$,
    if there exists $j'\leq j$ such that $c(i - 1, j') \neq \emptyset$ and $u(r_{j'})/w_{i - 1} \geq u(r_j)/w_i$ then
    $c(i, j)$ be the allocation that $c(i - 1, j')$ combined with allocating resource $r_j$ to agent $a_i$,
    otherwise $c(i, j) = \emptyset$.

    There are at most $nm$ different combinations of $(i, j)$.
    For each $2\leq i\leq n, 1\leq j\leq m$,
    it takes at most $O(m)$ time to compute $c(i, j)$ by using the above recurrence relations.
    Therefore, our dynamic programming algorithm runs in $O(nm^2)$ time.
    Thus, this theorem holds.
\end{proof}

\section{Experiments}

To better understand the distinctions among SumAvg-envy-freeness, Sum-envy-freeness and Avg-envy-freeness,
it is important to investigate the existence of them in practical.
In this section, we address this question through a series of experiments.

We run 10,000 instances with $5, 6, 7, 8$ agents and 8 resources.
The linear preferences of the agents are generated either from \emph{impartial culture} (IC), with no restriction of domain, or following a distribution for preferences restricted to the \emph{single-peaked} domain~\cite{black1948rationale}.
Let us recall that a preference order $\succ$ is single-peaked with respect to an axis $>^O$ over the objects if there exists a unique peak object $x^* \in O$ such that for every pair of objects $a$ and $b$, $x^* >^O a >^O b$ implies $x^*\succ a \succ b$, and $a >^O b >^O x^*$ implies that $x^*\succ  b \succ a$.

In our experiments, the single-peaked preferences are generated from the \emph{single-peaked uniform peak} culture (SPUP), meaning they are generated by uniformly drawing a peak alternative on a given axis over the objects and then iteratively choosing the next preferred alternatives with equal probability on either the left or right of the peak along the axis.
The values of utility functions are drawn independently from a uniform distribution ranging from $1$ to $10,000$.
The weights of agents are drawn independently from two uniform distributions ranging from $1$ to $100$, and from $101$ to $200$, respectively.


The frequency of existence of an EF allocation in three different concepts are shown in Fig. \ref{Fig:2}.
The frequency of existence of an EF house allocation three different concepts are shown in Fig. \ref{Fig:3}.

\begin{figure}[!t]
    \centering
    \includegraphics[width = 8.5cm]{./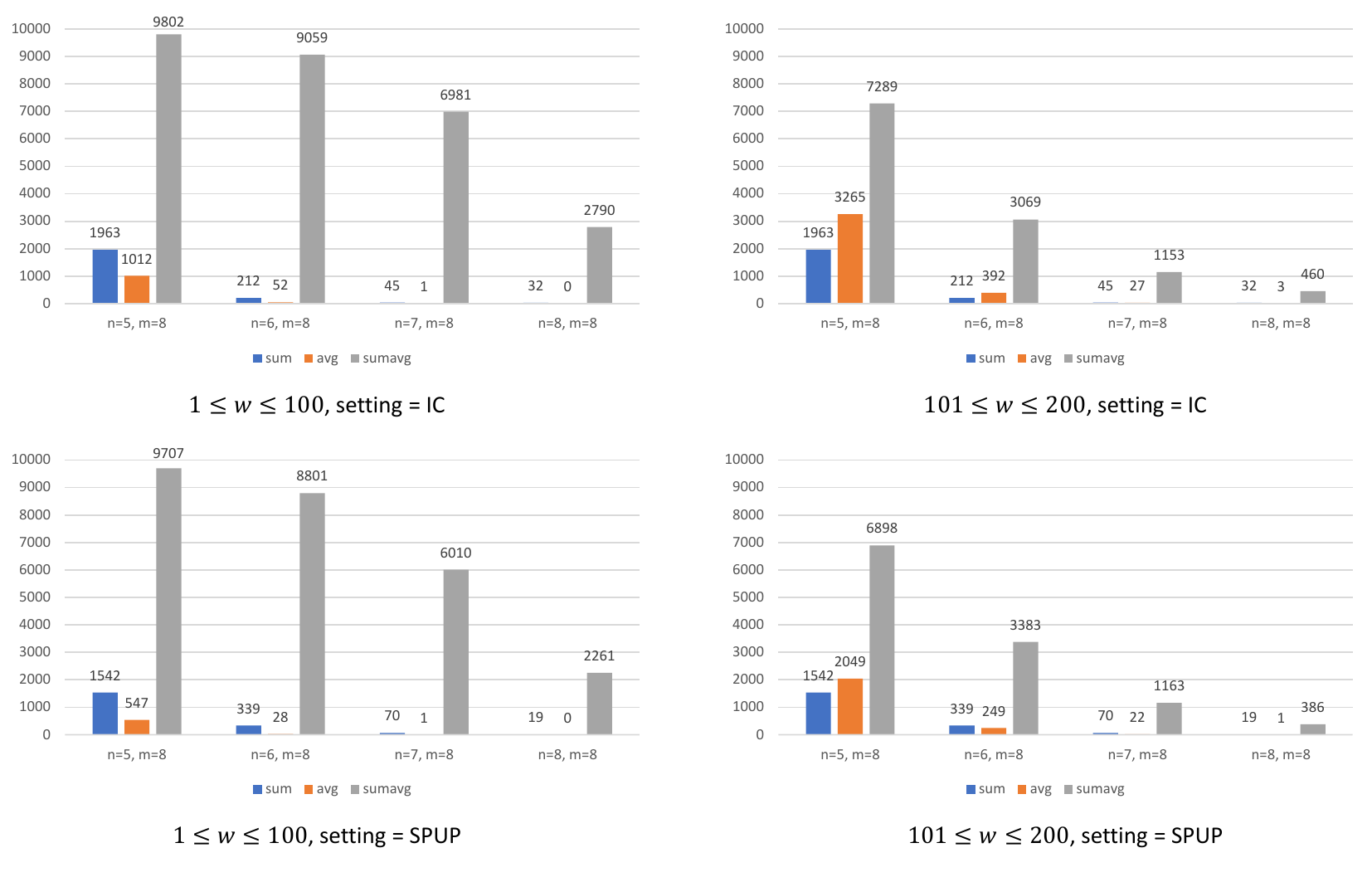}
    \caption{
    The number of instances that admit SumAvg-envy-free allocations(resp. Sum-envy-free allocations or Avg-envy-free allocations) in four different settings.
    The term $1\leq w\leq 100$ means that the weights of agents are drawn uniformly and independently from 1 to 100, while $101\leq w \leq 200$ means that the weights of agents are drawn uniformly and independently from 101 to 200.
    The term ``IC'' means the linear preferences of the agents are generated from impartial culture and ``SPUP'' means the linear preferences of the agents are generated from the single-peaked uniform peak culture.}
    \label{Fig:2}
\end{figure}

\begin{figure}[!t]
    \centering
    \includegraphics[width = 8.5cm]{./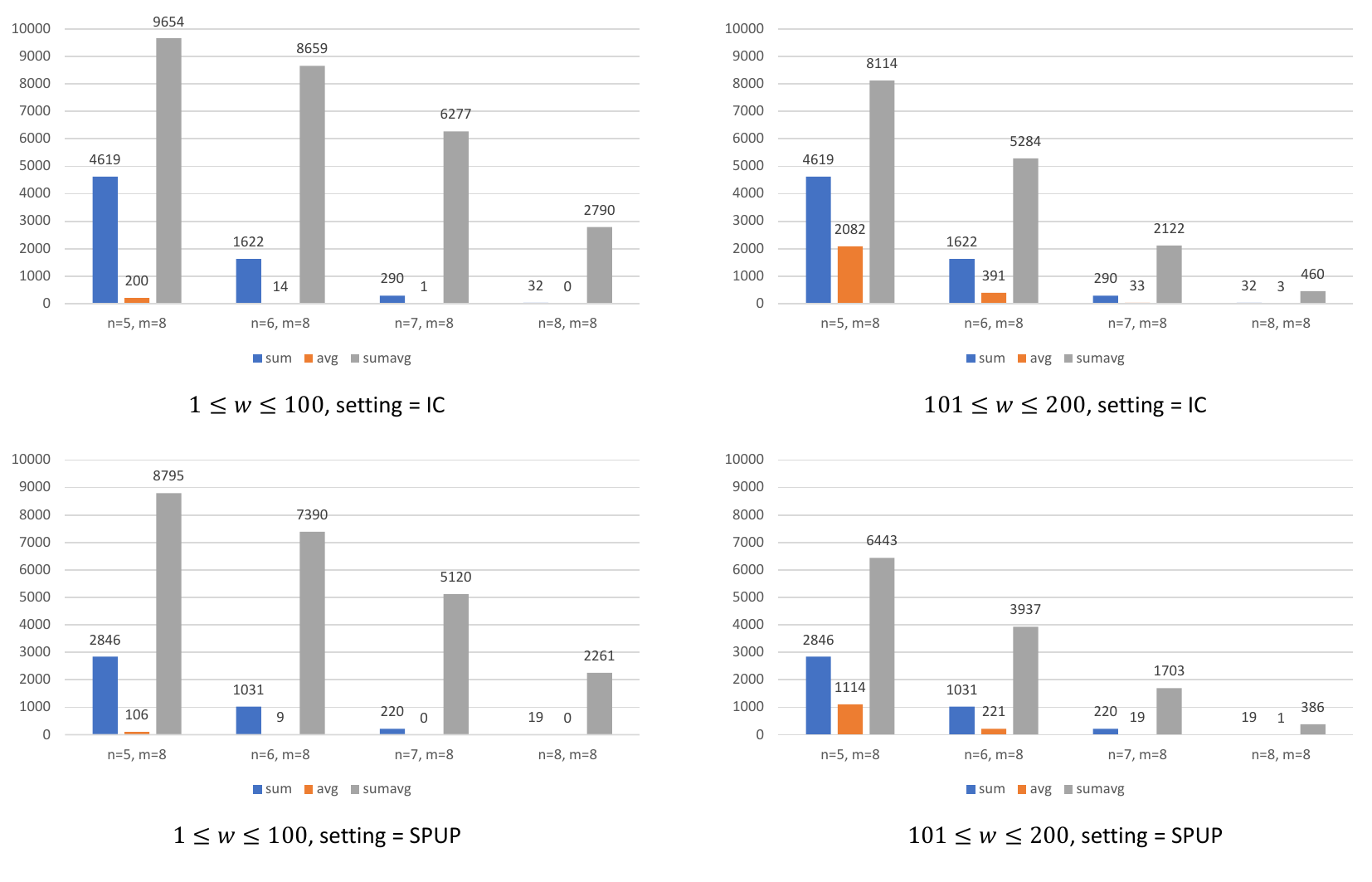}
    \caption{
    The number of instances that admit SumAvg-envy-free allocations(resp. Sum-envy-free allocations or Avg-envy-free allocations) in four different settings.
    The term $1\leq w\leq 100$ means that the weights of agents are drawn uniformly and independently from 1 to 100, while $101\leq w \leq 200$ means that the weights of agents are drawn uniformly and independently from 101 to 200.
    The term ``IC'' means the linear preferences of the agents are generated from impartial culture and ``SPUP'' means the linear preferences of the agents are generated from the single-peaked uniform peak culture.}
    \label{Fig:3}
\end{figure}

The experimental results reveal several key observations:


\begin{enumerate}
    \item In every setting, the number of instances admitting SumAvg-envy-free allocations is surprisingly larger than the number of instances admitting Sum-envy-free or Avg-envy-free allocations.
    \item In almost every setting, the number of instances admitting Sum-envy-free allocations is larger than those admitting Avg-envy-free allocations, aligning with the experimental results in~\citep{chakraborty2021weighted}.
    \item Compared to the setting where $101 \leq w \leq 200$, under the  $1\leq w\leq 100$ setting, the larger ratio between maximum and minimum weights exerts a stronger influence on weighted allocations.
We can see that in the $1\leq w\leq 100$ setting, SumAvg-envy-free allocations are more likely to exist while Avg-envy-free allocations are more likely to not exist. This outcome is consistent with expectations.
\end{enumerate}

Our experimental results demonstrate that achieving Sum-envy-freeness or Avg-envy-freeness is significantly more challenging than achieving SumAvg-envy-freeness.
When Sum-envy-free or Avg-envy-free allocations do not exist,
SumAvg-envy-free allocations may be able to serve as a viable alternative.

\section{Conclusion}

In this paper, we revisit the concept of weighted envy-freeness. 
To ensure the existence of fair allocations in broader scenarios, we introduce a new weighted fairness concept. While this approach may relax certain fairness guarantees, it significantly enhances allocation feasibility. Subsequently, we conduct a systematic computational complexity analysis of computing fair allocations under different fairness concepts.

We conclude with an interesting open problem. We have shown that \textsc{SAEF-House-Allocation} under additive monotonic preferences is NP-hard, while \textsc{SAEF-House-Allocation} under identical preferences can be solved in $O(nm^2)$ time. However, it remains unclear whether \textsc{SAEF-House-Allocation} under \emph{identical order preferences} can be solved in polynomial time. Identical order preferences mean that there exists an ordering of the resources $r_1, r_2, \dots, r_m$ such that $u_i(r_1)\geq u_i(r_2)\geq \dots \geq u_i(r_m)$ for every agent $a_i$.









\bibliography{WA}

\end{document}